\newcommand{\be}{\begin{equation}}
\newcommand{\ee}{\end{equation}}
\newcommand{\upd}{{\mathrm d}}
\newtheorem{lemma}{Lemma}
\newtheorem{remark}{Remark}
\begin{document}
\title{Asymptotic formula for quantum harmonic oscillator tunneling probabilities}
\author{Arkadiusz Jadczyk\smallskip \\ \emph{Laboratoire de Physique Th\'{e}orique, Universit\'{e} de Toulouse III \& CNRS}\\ \emph{118 route de Narbonne, 31062 Toulouse, France}\\
\emph{E-mail: ajadczyk@physics.org}}
\maketitle

\begin{abstract}Using simple methods of asymptotic analysis it is shown that for a quantum harmonic oscillator in $n$-th energy eigenstate the probability of tunneling into the classically forbidden region obeys an unexpected but  simple asymptotic formula: the leading term is inversely proportional to the cube root of $n$.\end{abstract}
Keywords: quantum harmonic oscillator, turning points, tunneling probabilities, asymptotic, Hermite polynomials, Airy function
\section{Introduction}
Consider the quantum harmonic oscillator in dimensionless variables, where the Hamiltonian, in terms of position and momentum operators $\hat{x},\hat{p},$ satisfying $[\hat{x},\hat{p}]=i,$ is given by the formula
\be \hat{H}=\frac{1}{2}(\hat{p}^2+\hat{x}^2).\notag\ee
Its normalized eigenstates, when represented as square integrable functions of the position variable $x$ are
\be \psi_n(x)=\pi^{-1/4}(2^n n!)^{-1/2} H_n(x)\,e^{-x^2/2},\notag\ee
$H_n$ are Hermite's polynomials: $H_n(x)=(-1)^ne^{x^2}\frac{d^n}{dx^n}\,e^{-x^2}.$ The corresponding probability densities $P_n(x)$ are then given by $|\psi_n(x)|^2,$ i.e.
\be P_n(x)=a_n \left(H_n(x)\,e^{-x^2/2}\right)^2,\notag\ee
where
\be a_n=\frac{1}{\pi^{1/2}2^n n!},\label{eq:an}\ee
so that the normalization condition holds:
\be \int_{-\infty}^{\infty}P_n(x)\upd x=1.\notag\ee
The functions $P_n(x)$ are symmetric with respect to the origin $x=0.$ The classical turning points are at $x=\pm\sqrt{2n+1},$ thus the probability of quantum tunneling into the classically forbidden region are given by the formula
\be P_{n,\mathrm{tun}}=2\int_{\sqrt{2n+1}}^{\infty}P_n(x)\,\upd x=2a_n Q_n,\label{eq:pn}\ee
where
\be Q_n=\int_{\sqrt{2n+1}}^{\infty}\left(H_n(x)\,e^{-x^2/2}\right)^2\,\upd x.\label{eq:qn}\ee
The probability distributions $P_n(x)$ are usually represented (see, for example, \cite[p. 168]{thaller}) as in Fig. \ref{fig:prn}
 \begin{figure}[!htb]
  \center
 \includegraphics[width=0.9\textwidth]{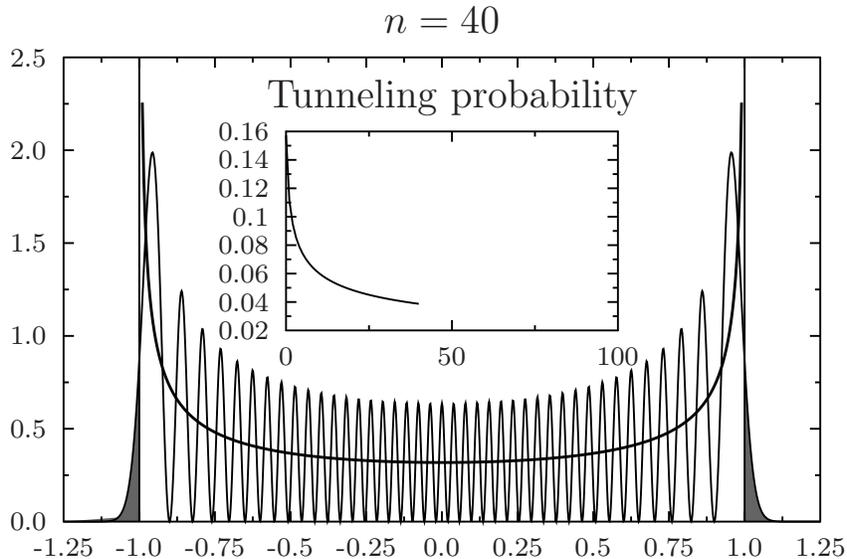}
 \caption{Probability distribution $P_{40}(x),$ with the classical turning points scaled to $x=\pm 1,$ and the curve $P_{n,\mathrm{tun}}$ for $n$ from $0$ to $40.$  The black $U$-shaped curve represents classical probability distribution $P_{\mathrm{class}}(x)=1/(\pi\sqrt{1-x^2}\,)$ for an oscillator with known energy and unknown phase. Shadowed tails beyond the classical turning points are responsible for non-zero quantum tunneling probabilities. }\label{fig:prn}
\end{figure}

The tunneling probabilities $P_{n,\mathrm{tun}}$ are rarely discussed in textbooks on quantum mechanics. Sometimes they are discussed in quantum chemistry textbooks. For instance in \cite[p. 92]{muller} the first three are calculated (though the second one with an error: it should be 0.1116 instead of 0.116). In \cite[p. 66-67; p. 98-99 in 2nd ed (2014)]{friedman} $P_{0,\mathrm{tun}}$ is calculated with a comment:  ``\textit{The probability of being found in classically forbidden regions decreases quickly with increasing  $n,$ and vanishes entirely as  $n$ approaches infinity}.'' Yet the question of ``how quickly" is left open there. To answer this question we will derive the asymptotic behavior, first by using a heuristic method based on a simple formula by Szeg\"{o}, then, in a rigorous way, using another formula due to Olver. Both methods lead to the same final result. The problem that we solve is closely related to a more general problem of large quantum numbers and the correspondence principle.  As can be seen from Refs. \cite{kiwi,gao,eltschka,boisseau},  even for such a simple quantum system as the harmonic oscillator, there are surprises (a more general discussion can be found in the monograph \cite{bolivar}).  Our result showing a very slow approach to zero of $P_{n,\mathrm{tun}}$ for large $n$ may be considered as another surprise.

In section \ref{sec:szego} we derive the leading term formula using heuristic reasoning. In section \ref{sec:olver} we use mathematically rigorous arguments based on dominated convergence theorem, while controlling the order of terms. In section \ref{sec:2order} we derive the
second order correction to the leading term. The proof of the main lemma used in this paper is given in the Appendix. Our theoretical results are also verified numerically.
\section{Derivation of the asymptotic tunneling probability formula\label{sec:szego}}
We are interested in the approximate behavior of $P_{n,\mathrm{tun}},$ as a function of $n$, for ``large'' $n.$ To this end we first use the asymptotic formula by Szeg\"{o} \cite[p. 201, (8.22.14)]{szego} ( valid in the transition regions )
\be e^{-\frac{x^2}{2}}H_n(x)=\frac{3^{\frac{1}{3}}2^{\frac{n}{2}+\frac{1}{4}}(n!)^{\frac{1}{2}}}{\pi^{\frac{3}{4}}n^{\frac{1}{12}}} \mathrm{A}(t)\left(1+O(n^{-\frac{2}{3}})\right),\label{eq:sza}\ee
\be x=(2n+1)^{1/2}-2^{-1/2}3^{-1/3}n^{-1/6}t.\notag\ee
The function $\mathrm{A}(x)$ used by Szeg\"{o} relates to the Airy functions $\mathrm{Ai}(x)$ by the following formula (cf. \cite[p.194]{olver}):
\be \mathrm{A}(t)=3^{-1/3}\pi\, \mathrm{Ai}(-3^{-1/3}t).\notag\ee
Thus Eq. \ref{eq:sza} can be written as\footnote{Another use of the Airy function in a discussion of the harmonic oscillator wave functions can be found, for instance, in \cite[Fig. 5-23]{powell}, in the section on WKB approximation.}
\be e^{-x^2/2}H_n(x)\approx \pi^{1/4}2^{n/2+1/4}(n!)^{1/2}n^{-1/12}\mathrm{Ai}(s),\label{eq:l}\ee
where \be s=2^{1/2}n^{1/6}(x-\sqrt{2n+1}).\notag\ee
The above asymptotic formula is proven to be valid for $t$ bounded.  In the calculation below I will use this result for all $s > 0,$ because the main contributions to integral come from a very small neighborhood of $s = 0,$ in particular when $n$ is large in (\ref{eq:l}).\\
Since $\frac{dx}{ds}=2^{-1/2}n^{-1/6},$ we can calculate now $Q_n$ defined in Eq. \ref{eq:qn} as
\be Q_n=\int_{\sqrt{2n+1}}^{\infty}\left(H_n(x)\,e^{-x^2/2}\right)^2\,\upd x.\label{eq:qn1}\ee
We get
\be Q_n\approx\pi^{1/2}2^n n! n^{-1/3}\int_0^{\infty}\mathrm{Ai}^2(s)\upd s.\notag\ee
The integral of the squared Airy function is known (cf. e.g. \cite[Eq. (3.75), p. 50]{airy}):
\be \int_0^{\infty}\mathrm{Ai}^2(s)\upd s=\frac{1}{3^{2/3}\,\Gamma^2[1/3]}.\notag\ee
Therefore (cf. Eqs. (\ref{eq:pn}),(\ref{eq:an}))
\begin{align} P_{n,\mathrm{tun}}&=2a_nQ_n&\\
&\approx 2\frac{1}{\pi^{1/2}2^n n!}\pi^{1/2}2^n n! n^{-1/3}\frac{1}{3^{2/3}\,\Gamma^2[1/3]}&\notag\\
&=\frac{2}{3^{2/3}\Gamma^2[1/3]}n^{-1/3}.\label{eq:pns}\end{align}
Thus, numerically,
\be P_{n,\mathrm{tun}}\approx \frac{0.133975}{n^{1/3}}.\label{eq:fita}\ee
 \begin{figure}[!htb]
  \center
 \includegraphics[width=0.9\textwidth]{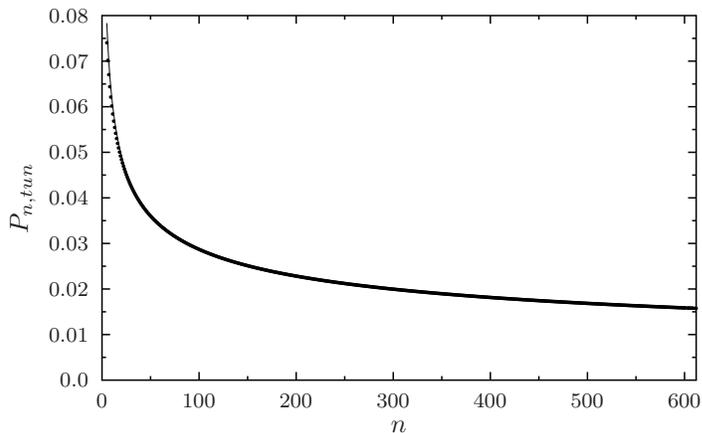}
 \caption{Comparison of the tunneling probabilities calculated from the exact formula with the curve given by Eq.
 (\ref{eq:fita}), starting from $n=5.$} \label{fig:fit1}
\end{figure}
As it can be seen from Fig. \ref{fig:fit1}, the asymptotic formula (\ref{eq:fita}) seems to agree with the numerical calculations for $n$ up to $612.$ Yet in its derivation we have assumed the validity of Szeg\"{o}'s asymptotic formula in the unbounded interval. Can such an extension be mathematically justified? We will address this question in the next section.
\section{Using Olver's formula\label{sec:olver}}
Olver \cite[p. 403]{olver97} gives the following formula valid for $x\ge 1$:
\be e^{-\frac{\nu^2x^2}{2}}H_n(\nu x)=c_n\left(\frac{\zeta}{x^2-1}\right)^{\frac{1}{4}}\left(\mathrm{Ai}(\nu^{\frac{4}{3}}\zeta)+\epsilon(x)\right),\label{eq:olver}\ee
where
\be\nu=\sqrt{2n+1},\quad c_n=(2\pi)^{\frac{1}{2}}e^{-\frac{\nu^2}{4}}\nu^{\frac{(3\nu^2-1)}{6}},\ee
\be \zeta=\left(\frac{3}{4}x(x^2-1)^{1/2}-\frac{3}{4}\cosh^{-1}(x)\right)^{2/3},\label{eq:zeta}\ee
and
\be |\epsilon(x)|\le 1.36\left(\exp(0.09\nu^{-2})-1\right)\mathrm{Ai}(\nu^{4/3}\zeta).\label{eq:eps}\ee
\begin{remark}: The function $\cosh^{-1}(x)$ in the above formula denotes the positive branch of the inverse function of $\cosh.$ For general aspects of asymptotic methods for integrals see \cite{temme}, where, in particular, Section 23.4 provides a complete description of the Airy-type expansion of Hermite polynomials. Notice also that $\nu^{-2}=1/(2n+1) = (1/(2n)) (1-1/(2n))+O(n^{-3}).$
\end{remark}
We will need Eq. (\ref{eq:olver}) squared.
Using Eqs. (\ref{eq:olver})-(\ref{eq:eps}) we write it in the form
\be \left(e^{-\frac{\nu^2x^2}{2}}H_n(\nu x)\right)^2= d_n \left(1+O(\frac{1}{n})\right) \left(\frac{\zeta}{x^2-1}\right)^{1/2}\mathrm{Ai}^2(\nu^{4/3}\zeta),\notag\ee
where
\be d_n=2\pi e^{-\nu^2/2}\nu^{(3\nu^2-1)/3}.\notag\ee
In order to use this formula for calculating $P_{n,\mathrm{tun}},$  we use  Eq. (\ref{eq:qn}) changing the integration variable to get
\begin{align} P_{n,\mathrm{tun}}&=2a_n\nu\int_1^{\infty}\left(H_n(\nu x)\,e^{-\nu^2 x^2/2}\right)^2 \upd x&\notag\\
&= 4 a_n\pi \nu e^{-\nu^2/2}\nu^{(3\nu^2-1)/3} (1+O(\frac{1}{n}))I_n,\notag\end{align}
where
\be I_n=\int_1^\infty \left(\frac{\zeta}{x^2-1}\right)^{1/2}\mathrm{Ai}^2(\nu^{4/3}\zeta) \upd x.\notag\ee
An easy power expansion leads to
\be 4 a_n\pi \nu e^{-\nu^2/2}\nu^{(3\nu^2-1)/3} =n^{1/3}\left(2^{7/3}+ O(\frac{1}{n})\right),\notag\ee
therefore
\be P_{n,\mathrm{tun}}= 2^{7/3}n^{1/3}\left(1+O(\frac{1}{n})\right)I_n.\notag\ee
\begin{figure}[!htb]
  \center
 \includegraphics[width=0.9\textwidth]{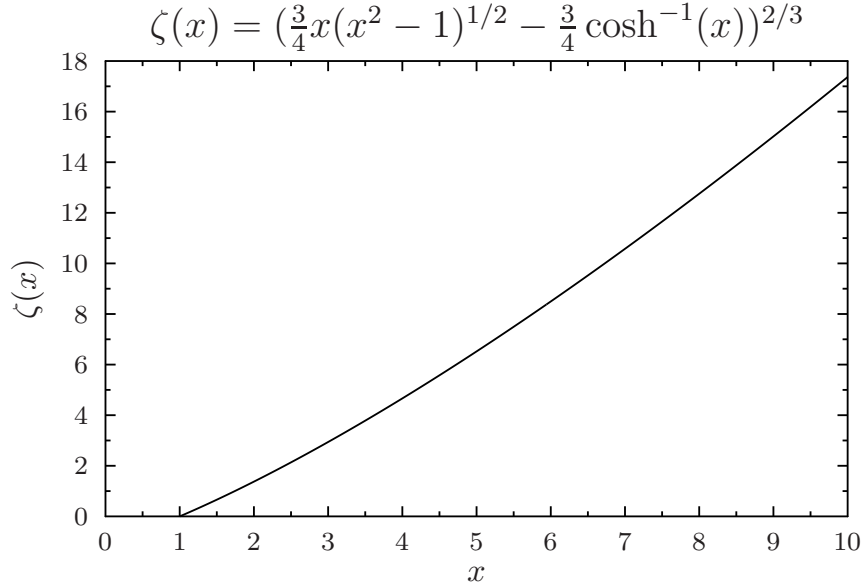}
 \caption{Plot of the function $\zeta(x)$ for $x>1.$}\label{fig:zet}
\end{figure}
It is convenient now to introduce a new integration variable \be t=\nu^{4/3}\zeta.\label{eq:t}\ee Fig.~\ref{fig:zet} shows the dependence of $\zeta$ on $x.$ We find
\be \frac{\upd\zeta}{\upd x}=\left(\frac{x^2-1}{\zeta}\right)^{1/2},\notag\ee
therefore
\be \upd x=\nu^{-4/3}\left(\frac{\zeta}{x^2-1}\right)^{1/2}\,\upd t, \notag\ee
and thus
\be I_n=\nu^{-4/3}\int_0^\infty f_n(t)\mathrm{Ai}^2(t)\upd t,\notag\ee
where
\be f_n(t)=\frac{\zeta}{x^2-1}\notag\ee is implicitly defined through $\zeta=\nu^{-\frac{4}{3}}t$ and Eq. (\ref{eq:zeta}) that allows us to find $x$ as a function of $\zeta.$
Since \be
\nu^{-4/3}=(2n+1)^{-2/3}= 2^{-2/3}n^{-2/3}\left(1+O(\frac{1}{n})\right),\notag\ee for $P_{n,\mathrm{tun}}$ we get
\be P_{n,\mathrm{tun}}= 2^{5/3}n^{-1/3}\left(1+O(\frac{1}{n})\right)\int_0^\infty f_n(t)\mathrm{Ai}^2(t)\upd t.\label{eq:pnf}\ee
Denote by $F_n$ the integral in Eq. (\ref{eq:pnf}):
\be F_n=\int_0^\infty f_n(t)\mathrm{Ai}^2(t)\upd t.\label{eq:Fn}\ee

We will show now that \be F_n\sim F_\infty=2^{-2/3}\int_0^\infty \mathrm{Ai}^2(t)\upd t=\frac{1}{2^{2/3}3^{2/3}\Gamma(\frac{1}{3})^2},\label{eq:fnlim}\ee
and therefore
\be P_{n,\mathrm{tun}}\sim\frac{2}{3^{2/3}\Gamma(\frac{1}{3})^2}\,n^{-1/3},\notag\ee
in agreement with Eq. (\ref{eq:pns}).
 \begin{figure}[!htb]
  \center
 \includegraphics[width=0.9\textwidth]{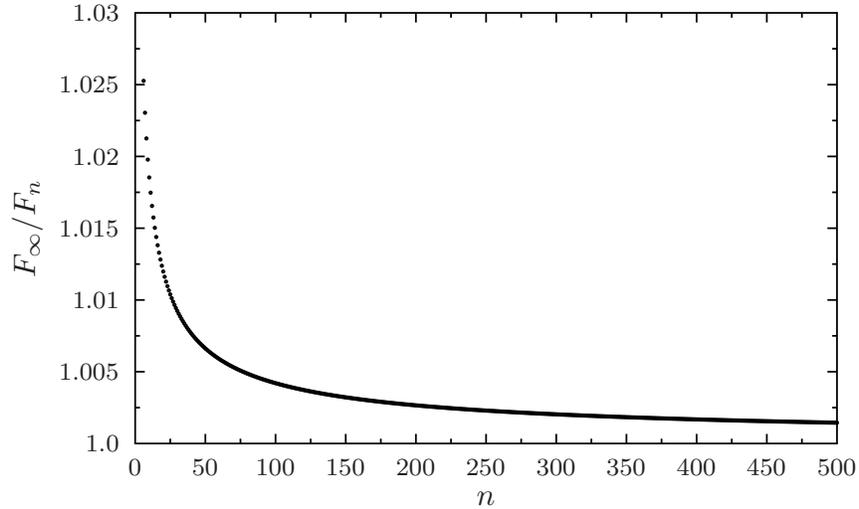}
 \caption{Numerically calculated ratios $F_\infty/F_n,$ $6\leq n\leq 500.$}\label{fig:r}
\end{figure}
Fig. \ref{fig:r} shows numerically calculated ratios $F_\infty/F_n$ for $6\leq n\leq 500.$
In order to verify Eq. (\ref{eq:fnlim}), we notice that (see the Appendix) the function \be f(x)=\frac{\left(\frac{3}{4}x(x^2-1)^{1/2}-\frac{3}{4}\cosh^{-1}(x)\right)^{2/3}}{x^2-1},\quad (x>1)\label{eq:s11}\ee
is monotonically decreasing, from $f(1)=\lim_{x\rightarrow 1+}\,f(x)=2^{-2/3}$ to $0$ at $x\rightarrow \infty.$
It follows that the sequence $f_n(t)$ is uniformly bounded. Moreover, with fixed $t>0,$ and since $\zeta=(2n+1)^{-1/3}t,$ when $n\rightarrow\infty,$ $\zeta\rightarrow 0,$ thus $x\rightarrow 1,$ therefore $\lim_{n\rightarrow\infty}f_n(t)= f(1)=2^{-2/3}.$ Thus \be \lim_{n\rightarrow\infty}\,F_n=\int_0^\infty \left(\lim_{n\rightarrow\infty}f_n(t)\right)\mathrm{Ai}^2(t)\upd t.\notag\ee
by the dominated convergence theorem.
\section{Next order approximation\label{sec:2order}}
Using the method described in the previous section one can easily get the next term in the asymptotic formula for $F_n.$ The result is
\be F_n=2^{-2/3}\int_0^\infty \mathrm{Ai}^2(t)\upd t-\frac{2^{-2/3}n^{-2/3}}{5}\int_0^\infty t\mathrm{Ai}^2(t)\upd t+O(\frac{1}{n}),\label{eq:fn2}\ee
which, after some simplifications leads to the formula
\be P_{n,\mathrm{tun}}= \frac{1}{n^{1/3}}\left(0.133975-\frac{0.0122518}{n^{2/3}}\right)+O(n^{-4/3}).\label{eq:2o}\ee
The second term in Eq. (\ref{eq:fn2}) can be derived from the second term in the Taylor series expansion $f_n(t)=f_n(0)+(df_n/dt)|_{t=0}\,t +O(t^2),$ if we use Eq. (\ref{eq:t}) and notice that
\be (df_n/dt)|_{t=0}=\nu^{-4/3}(df_n/d\zeta)|_{\zeta=0}\sim 2^{-2/3}n^{-2/3}(-1/5).\notag\ee

Fig. \ref{fig:fit2} shows the comparison of the exact data with the two curves. The black curve, obtained from Eq. (\ref{eq:2o}) is closer to the data than the grey one, obtained from the simple approximation given by Eq. (\ref{eq:pns}).
 \begin{figure}[!htb]
  \center
 \includegraphics[width=0.9\textwidth]{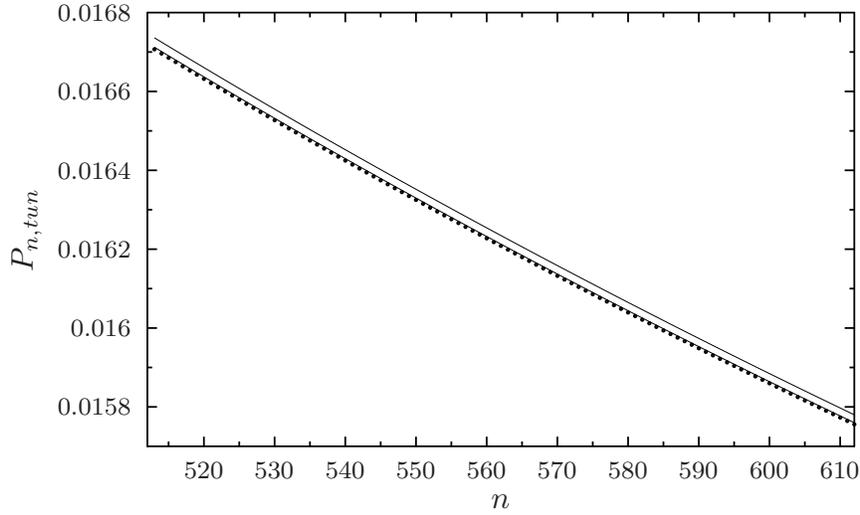}
 \caption{Details of the comparison for $n>512:$ exact data (dots), simple formula $P_{n,\mathrm{tun}}= 0.133975\, n^{-1/3}$ (upper curve), and the formula with the first correction term $P_{n,\mathrm{tun}}= 0.133975\,n^{-1/3}-0.0122518\, n^{-1}$ (lower curve).}  \label{fig:fit2}
\end{figure}
Additional correction terms have been obtained by Paris in \cite{rp} by different means.
\section{Appendix}
Fig. \ref{fig:zeta2} shows the computer generated plot of the function $\zeta(x)/(x^2-1)$ for $x>1.$ While the behavior of this function can be easily guessed from the plot, it is not that easy to prove rigorously that the function has the properties that are necessary for the application of the dominated convergence theorem used in this paper.
 \begin{figure}[!htb]
  \center
 \includegraphics[width=0.9\textwidth]{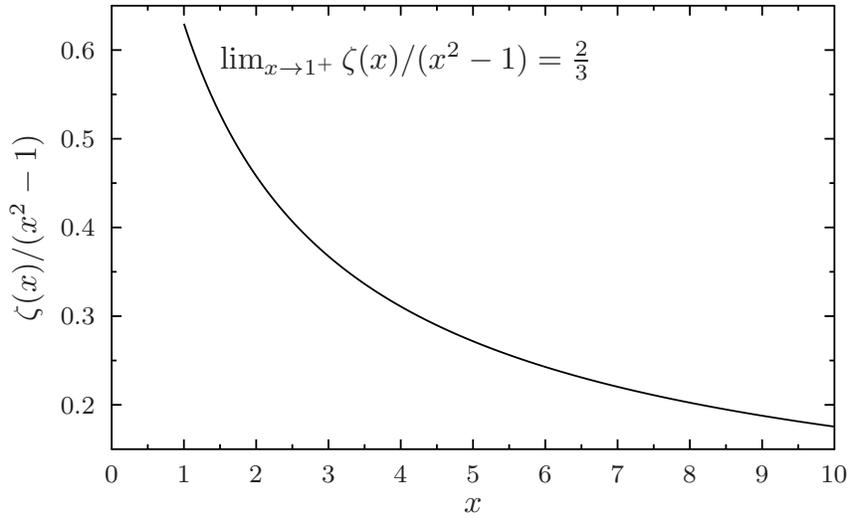}
 \caption{Plot of the function $\zeta(x)/(x^2-1).$}  \label{fig:zeta2}
\end{figure}
\begin{lemma}
Let $f(x)$ be as in Eq. (\ref{eq:s11}).
Then $f$ is monotonically decreasing, from $f(1)=\lim_{x\rightarrow 1+}\,f(x)=2^{-2/3}$ to $0$ at $x\rightarrow \infty.$
\label{lem:1}\end{lemma}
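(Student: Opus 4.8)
The plan is to work with the ``inner'' function
$$g(x)=\frac{3}{4}\Bigl(x\sqrt{x^2-1}-\cosh^{-1}(x)\Bigr),$$
so that $\zeta=g(x)^{2/3}$ and $f(x)=g(x)^{2/3}/(x^2-1)$. The first preparatory step is to differentiate $g$; a short computation gives the pleasantly simple
$$g'(x)=\frac{3}{2}\sqrt{x^2-1},$$
because the derivatives of $x\sqrt{x^2-1}$ and of $\cosh^{-1}(x)$ combine so that the $1/\sqrt{x^2-1}$ terms cancel. Since $g(1)=0$ and $g'>0$ on $(1,\infty)$, we also record that $g(x)>0$ there, which makes $\zeta$ well defined and lets us take a logarithmic derivative.

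Next I would take the logarithmic derivative of $f$. From $\ln f=\tfrac{2}{3}\ln g-\ln(x^2-1)$ and the formula for $g'$ one finds
$$\frac{f'(x)}{f(x)}=\frac{\sqrt{x^2-1}}{g(x)}-\frac{2x}{x^2-1},$$
so that $f$ is strictly decreasing precisely when $(x^2-1)^{3/2}<2x\,g(x)$ for all $x>1$. The natural way to attack this single inequality is the substitution $x=\cosh u$, $u>0$, for which $\sqrt{x^2-1}=\sinh u$ and $\cosh^{-1}(x)=u$. The inequality then becomes $\sinh^3u<\tfrac{3}{2}\cosh u\,(\sinh u\cosh u-u)$, and after using $\cosh^2u=1+\sinh^2u$ to clear the algebra it reduces to showing
$$\phi(u):=3\sinh u+\sinh^3u-3u\cosh u>0\qquad(u>0).$$

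The crux is this last one-variable inequality, and it settles cleanly. One checks $\phi(0)=0$ and differentiates to get $\phi'(u)=3\sinh u\,(\sinh u\cosh u-u)$; since $\sinh u\cosh u=\tfrac{1}{2}\sinh 2u>u$ for $u>0$, each factor of $\phi'$ is positive, hence $\phi'>0$ and therefore $\phi>0$ on $(0,\infty)$. This gives the monotonic decrease. The two limiting values are then obtained by expansion: as $u\to0$ one has $g\sim\tfrac{1}{2}u^3$ and $x^2-1\sim u^2$, so $f\to 2^{-2/3}$, while as $x\to\infty$ one has $g\sim\tfrac{3}{4}x^2$, giving $f\sim(3/4)^{2/3}x^{-2/3}\to0$.

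The main obstacle is purely the monotonicity step: establishing $(x^2-1)^{3/2}<2x\,g(x)$ directly in $x$ is awkward because of the mixture of algebraic and $\cosh^{-1}$ terms. The key insight that makes the proof short is the substitution $x=\cosh u$, which both linearizes $\cosh^{-1}(x)$ and produces the factorization $\phi'(u)=3\sinh u\,(\tfrac{1}{2}\sinh 2u-u)$, reducing everything to the elementary fact $\sinh 2u>2u$. Everything else---the derivative of $g$ and the two boundary limits---is routine.
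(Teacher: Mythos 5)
Your proof is correct: the derivative $g'(x)=\tfrac{3}{2}\sqrt{x^2-1}$ is right, the logarithmic-derivative reduction to $(x^2-1)^{3/2}<2x\,g(x)$ is valid since $g>0$ on $(1,\infty)$, the hyperbolic substitution correctly turns this into $\phi(u)=3\sinh u+\sinh^3u-3u\cosh u>0$, and the factorization $\phi'(u)=3\sinh u(\sinh u\cosh u-u)$ together with $\sinh 2u>2u$ closes the argument; the two limits are also computed correctly. The paper proceeds differently in the decisive step. It strips off the harmless power $2/3$, substitutes $x=\sec s$ (the trigonometric counterpart of your $x=\cosh u$), and writes the resulting quotient as $a(s)/b(s)$ with the integral representations $a(s)=2\int_0^s h(u)\,\upd u$ and $b(s)=3\int_0^s h(u)\sec u\,\upd u$, $h(u)=\sec u\tan^2u$; monotonicity then follows from the identity $a'b-ab'=6h(s)\int_0^s h(u)(\sec u-\sec s)\,\upd u<0$, a monotone-ratio-of-integrals argument, and the limit at $1$ comes from l'H\^{o}pital. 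Your route trades that structural observation for an explicit one-variable inequality: it requires one more round of algebra but bottoms out in the completely elementary fact $\sinh 2u>2u$, whereas the paper's integral representation makes the sign of the numerator visible at a glance without any series or auxiliary inequality. Both are rigorous and of comparable length; yours is arguably more self-contained for a reader who does not spot the integral representations.
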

\begin{proof}
We have
\be f(x)=\left(\frac{3}{4}\tilde{f}(x)\right)^{\frac{2}{3}},\label{eq:s2}\ee
where
\be \tilde{f}(x)=\frac{x(x^2-1)^{1/2}-\cosh^{-1}(x)}{(x^2-1)^{3/2}},\notag\ee
and it is enough to show that the statement in the lemma holds for $\tilde{f}.$ Using the identity
\be \cosh^{-1}(x)=\log (x+\sqrt{x^2-1}),\notag\ee
and introducing $s,\, 0<s<\pi/2,$ through $x =\sec s,$ we can write $\tilde{f}(x)$ as a function $z(s)$ of $s$
\be z(s)= \frac{\sec s\, \tan\, s-\log(\sec s+\tan s)}{\tan^3 s}=\frac{a(s)}{b(s)},\label{eq:s3}\ee
and to prove the lemma we need to prove that $z$ is monotonically decreasing from $z(0)=\lim_{s\rightarrow 0+} z(s).$ Let \be h(u)=\sec u \tan^2 u.\notag\ee
By differentiation of the formulas below it is easy to verify that
\be
a(s)=  2 \int_0^s h(u)\upd u,\quad b(s)= 3 \int_0^s h(u)\,\sec u\,\upd u.
\notag\ee
We notice that $a(0)=b(0),$ and that, by de l'H\^{o}spital's rule $$ z(0)=\lim_{s\rightarrow 0+}\frac{a'(s)}{b'(s)}=\frac{2}{3\sec 0}=2/3.$$
Therefore, using Eqs. (\ref{eq:s11}),(\ref{eq:s2}), $f(1)=2^{-2/3}.$
Taking the derivative of Eq. (\ref{eq:s3}) we get
\be z'(s)=\frac{a'(s)b(s)-a(s)b'(s)}{b^2(s)}.\notag\ee The denominator is positive for $s>0.$ The numerator is negative owing to the fact that, for $s>0,$ \begin{align}
a'(s)b(s)-a(s)b'(s)&=6\left(h(s)\int_0^s h(u)\sec u\,\upd u-h(s)\sec s\int_0^s h(u)\upd u\right)&\notag\\
&=6h(s)\int_0^s h(u)(\sec u - \sec s)\upd u,
\notag\end{align}
with $h(s)>0$, and $\sec(u)<\sec s$  for $u<s.$ Therefore $z'(s)<0.$
\end{proof}
\section*{Acknowledgments}
The author would like to thank  Ilia Krasikov, Richard Paris, Philippe Sosoe, Jerzy Szulga, and Nico Temme for their encouragement, helpful insights, comments and criticism. In particular Nico Temme has suggested using Olver's formula
and Richard Paris assisted me throughout the evolution of this paper.  Thanks are also due to Ronald Friedman for his interest in the problem, as well as to the anonymous referee for helpful comments.

\end{document}